\documentclass[12pt,a4paper]{article}
\usepackage[a4paper,total={5.50in, 8.0in}]{geometry}
\usepackage{graphicx,hyperref}
\usepackage[round]{natbib}
\usepackage{amsmath,amssymb,color}
\linespread{1.5}

\usepackage[linewidth=2pt,]{mdframed}
\usepackage{xcolor}

\usepackage{amsthm}
\newtheorem{proposition}{Proposition}

\begin{document}

\begin{center}
\begin{LARGE}
\textbf{Note on the robustification of the Student $t$-test statistic
using the median and the median absolute deviation}
\end{LARGE}

\medskip
\begin{large}
Chanseok Park 
\end{large}

\medskip
\begin{itshape}
Applied Statistics Laboratory \\
Department of Industrial Engineering \\
Pusan National University, Busan 46241, Republic of Korea 
\end{itshape}
\end{center}

\begin{abstract}
\noindent
In this note, 
we propose a robustified analogue of the conventional Student $t$-test statistic.
The proposed statistic is easy to implement and thus practically useful.
We also show that it is a pivotal quantity and converges to a standard normal distribution.

\bigskip
\noindent\textbf{\uppercase{Keywords:}} Student $t$-test statistic, 
median, median absolute deviation, pivot, 
robustness, asymptotic normality. 
\end{abstract}

\section{Introduction}
The conventional Student $t$-test statistic
\[ 
T=  \frac{\bar{X}-\mu}{S/\sqrt{n}}, 
\]
is one of the most widely used test statistics in statistical applications. \
The downside of this statistic, however, is that it is 
extremely sensitive to contamination. 
In this note, we introduce a robustified analogue of the conventional Student $t$-test
statistic which is robust to data contamination. 
The proposed statistic also possesses the important property of asymptotic normality.  
We provide the proof of its asymptotic normality in the following section.

\section{Asymptotic properties of the proposed methods}
We propose the test statistic defined as follows: 
\begin{equation} \label{EQ:TeststatmM0}
T_{m} = \frac{ \hat{\mu}_m - \mu }{ \hat{\sigma}_M/\sqrt{n} }. 
\end{equation}
where $\hat{\mu}_m$ is the median estimator and $\hat{\sigma}_M$ is 
the median absolute deviation (MAD) estimator which are defined as
\begin{align*} 
\hat{\mu}_m 
&= \mathop{\mathrm{median}}_{1\le i\le n} X_i 
\intertext{and}
\hat{\sigma}_M 
&= \mathop{\mathrm{median}}_{1\le i\le n} 
  \big| X_i - \mathop{\mathrm{median}}_{1\le i\le n} X_i  \big|,
\end{align*} 
respectively.

Then a natural question arises: Does the proposed statistic, 
$T_m$, converge to the standard normal distribution under $H_0: \mu=\mu_0$? 
In what follows, we answer this question.

\begin{proposition} \label{THM:pivotmM}
Let $X_1,X_2,\ldots,X_n$ be a random sample from a location-scale family with
location $\mu$ and scale $\sigma$.
Then the following statistic is a pivotal quantity: 
\begin{equation} 
\frac{\displaystyle\mathop{\mathrm{median}}_{1\le i\le n}X_i-\mu}%
{\displaystyle\mathop\mathrm{MAD}_{1\le i\le n}X_i}
= \frac{\displaystyle\mathop{\mathrm{median}}_{1\le i\le n}X_i-\mu}%
{\displaystyle\mathop{\mathrm{median}}_{1\le i\le n} 
  \big| X_i - \mathop{\mathrm{median}}_{1\le i\le n} X_i  \big|}
\label{EQ:pivotmM}
\end{equation}
\end{proposition}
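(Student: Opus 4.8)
The plan is to exploit the affine equivariance of both the sample median and the MAD. Since $X_1,\ldots,X_n$ is a random sample from a location-scale family with location $\mu$ and scale $\sigma>0$, I would first write $X_i = \mu + \sigma Z_i$, where $Z_1,\ldots,Z_n$ is a random sample from the standard member of the family (the one with location $0$ and scale $1$). By the very definition of a location-scale family, the joint distribution of $(Z_1,\ldots,Z_n)$ does not depend on $(\mu,\sigma)$; this is the fact we will cash in at the end.

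Next I would record the two equivariance identities. Because $t\mapsto \mu+\sigma t$ is strictly increasing (here $\sigma>0$), it commutes with the order statistics, so
\[
\operatorname*{median}_{1\le i\le n} X_i \;=\; \mu + \sigma \operatorname*{median}_{1\le i\le n} Z_i,
\]
and this holds both for odd $n$ and under the usual midpoint convention for even $n$. Substituting this into the definition of the MAD and pulling the scale out of the absolute value gives
\[
\operatorname*{MAD}_{1\le i\le n} X_i
= \operatorname*{median}_{1\le i\le n}\bigl| \mu+\sigma Z_i - \mu - \sigma\operatorname*{median}_{1\le j\le n} Z_j \bigr|
= \sigma \operatorname*{median}_{1\le i\le n}\bigl| Z_i - \operatorname*{median}_{1\le j\le n} Z_j \bigr|
= \sigma \operatorname*{MAD}_{1\le i\le n} Z_i,
\]
again using $\sigma>0$ to write $|\sigma\,\cdot\,| = \sigma\,|\cdot|$.

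Finally I would substitute both identities into the statistic in \eqref{EQ:pivotmM}: the location $\mu$ cancels in the numerator and the scale $\sigma$ cancels between numerator and denominator, leaving
\[
\frac{\operatorname*{median}_{1\le i\le n} X_i - \mu}{\operatorname*{MAD}_{1\le i\le n} X_i}
= \frac{\operatorname*{median}_{1\le i\le n} Z_i}{\operatorname*{MAD}_{1\le i\le n} Z_i},
\]
which is a function of $(Z_1,\ldots,Z_n)$ alone. Since the distribution of the right-hand side involves neither $\mu$ nor $\sigma$, the quantity is pivotal, as claimed.

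I do not expect a serious obstacle here; the argument is essentially just equivariance bookkeeping. The only points deserving care are the even-$n$ convention for the median (so that the equivariance identity still holds exactly) and the positivity of the scale parameter (so that the affine map is monotone and $|\sigma|=\sigma$); if one wished to permit $\sigma<0$, the identical computation goes through with $|\sigma|$ in place of $\sigma$. One may also remark that the denominator is a.s.\ positive for a continuous parent distribution, so the ratio is a.s.\ well defined, but this is not needed for the pivotality conclusion.
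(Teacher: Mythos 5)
Your proof is correct and follows essentially the same route as the paper: write $X_i=\mu+\sigma Z_i$, use the location--scale equivariance of the median and MAD, and observe that the ratio reduces to a function of the $Z_i$ alone, hence is pivotal. Your added remarks on the even-$n$ median convention and $\sigma>0$ are careful refinements of the same argument, not a different approach.
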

\begin{proof}
Using Theorem 3.5.6 of \cite{Casella/Berger:2002}, 
we can write 
\[
X_i = \sigma Z_i + \mu, 
\]
where the distribution of $Z_i$ has the distribution of $Z_i$ with $\mu=0$ and $\sigma=1$.
Note that the distribution of $Z_i$ does not include any unknown parameters.
Next, it is easily seen that  
\begin{equation}
\mathop{\mathrm{median}}_{1\le i\le n}X_i 
   =\sigma \cdot \mathop{\mathrm{median}}_{1\le i\le n}Z_i + \mu 
\label{EQ:mumedian}
\end{equation}
and
\begin{align}
\mathop{\mathrm{median}}_{1\le i\le n} 
  \big| X_i - \mathop{\mathrm{median}}_{1\le i\le n} X_i  \big|  
&=\mathop{\mathrm{median}}_{1\le i\le n} 
  \big|\sigma Z_i+\mu - 
  (\sigma \cdot \mathop{\mathrm{median}}_{1\le i\le n}Z_i+\mu)\big|\notag\\
&= \sigma \mathop{\mathrm{median}}_{1\le i\le n} 
   \big| Z_i - \mathop{\mathrm{median}}_{1\le i\le n}Z_i \big| .
\label{EQ:sigmaMAD} 
\end{align}
It is immediate upon substituting (\ref{EQ:mumedian}) 
and (\ref{EQ:sigmaMAD}) into (\ref{EQ:pivotmM}) that we have 
\[
\frac{\displaystyle\mathop{\mathrm{median}}_{1\le i\le n}Z_i}%
{\displaystyle\mathop{\mathrm{median}}_{1\le i\le n} 
  \big| Z_i - \mathop{\mathrm{median}}_{1\le i\le n} Z_i  \big|} .
\]
which shows that the expression in (\ref{EQ:pivotmM}) 
does not include any unknown parameters. This completes the proof.
\end{proof}

\begin{proposition} \label{THM:TeststatmM}
Let $X_1,X_2,\ldots,X_n$ be a random sample from a normal distribution 
with $\mu$ and variance $\sigma^2$. 
Then we have 
\begin{equation} \label{EQ:TeststatmM1}
\sqrt{\frac{2n}{\pi}} {\Phi^{-1}\Big(\frac{3}{4}\Big)} \cdot 
\frac{\displaystyle\mathop{\mathrm{median}}_{1\le i\le n}X_i-\mu}%
{\displaystyle\mathop{\mathrm{median}}_{1\le i\le n} 
  \big| X_i - \mathop{\mathrm{median}}_{1\le i\le n} X_i  \big|}
\stackrel{d}{\longrightarrow} N(0,1).
\end{equation}
\end{proposition}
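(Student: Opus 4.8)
The plan is to establish the joint asymptotic normality of the sample median and the MAD, and then combine them via Slutsky's theorem. First I would recall the classical result on the asymptotic normality of the sample median: for a random sample from a distribution with density $f$ that is positive and continuous at the population median $\theta$, one has
\[
\sqrt{n}\Big(\mathop{\mathrm{median}}_{1\le i\le n}X_i - \theta\Big)
\stackrel{d}{\longrightarrow} N\!\left(0,\ \frac{1}{4 f(\theta)^2}\right).
\]
For the normal case $X_i \sim N(\mu,\sigma^2)$ we have $\theta = \mu$ and $f(\mu) = 1/(\sigma\sqrt{2\pi})$, so the asymptotic variance is $\pi\sigma^2/2$. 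This gives
\[
\sqrt{\frac{2n}{\pi}}\,\cdot\,\frac{\mathop{\mathrm{median}}_{i}X_i - \mu}{\sigma}
\stackrel{d}{\longrightarrow} N(0,1).
\]

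Next I would handle the denominator. By Proposition~\ref{THM:pivotmM} (or directly by the location-scale reduction $X_i = \sigma Z_i + \mu$ with $Z_i \sim N(0,1)$), it suffices to show that the MAD is a consistent estimator of $\sigma\,\Phi^{-1}(3/4)$, i.e.
\[
\mathop{\mathrm{MAD}}_{1\le i\le n}X_i \stackrel{p}{\longrightarrow} \sigma\,\Phi^{-1}\!\Big(\tfrac{3}{4}\Big).
\]
This is standard: the population MAD of $N(\mu,\sigma^2)$ is the value $m$ solving $P(|X-\mu|\le m)=1/2$, which is $m=\sigma\,\Phi^{-1}(3/4)$; consistency of the sample MAD follows because the sample median converges in probability to $\mu$ and the empirical distribution function of the $|X_i - \hat\mu_m|$ converges uniformly to that of $|X-\mu|$ (a Glivenko--Cantelli plus continuous-mapping argument). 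I need only convergence in probability here, not the asymptotic normality of the MAD, which simplifies this step considerably.

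Finally I would assemble the pieces. Write the statistic in~\eqref{EQ:TeststatmM1} as
\[
\sqrt{\frac{2n}{\pi}}\,\Phi^{-1}\!\Big(\tfrac{3}{4}\Big)\cdot
\frac{\mathop{\mathrm{median}}_{i}X_i - \mu}{\mathop{\mathrm{MAD}}_{i}X_i}
= \left(\sqrt{\frac{2n}{\pi}}\,\cdot\,\frac{\mathop{\mathrm{median}}_{i}X_i - \mu}{\sigma}\right)
\cdot\frac{\sigma\,\Phi^{-1}(3/4)}{\mathop{\mathrm{MAD}}_{i}X_i}.
\]
The first factor converges in distribution to $N(0,1)$ by the median CLT, and the second factor converges in probability to $1$ by the consistency of the MAD established above; Slutsky's theorem then yields the claimed convergence to $N(0,1)$. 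The main obstacle, if any, is a careful justification of the MAD consistency — in particular handling the fact that the centering in $|X_i - \hat\mu_m|$ is itself random — but this is routine via a uniform (Glivenko--Cantelli) argument combined with the fact that the population CDF of $|X-\mu|$ is continuous and strictly increasing at its median, so I would not dwell on it.
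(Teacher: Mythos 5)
Your proposal is correct and shares the paper's overall architecture: asymptotic normality of the sample median (variance $\pi\sigma^2/2$ in the normal case), consistency of the MAD as the denominator, and Slutsky's theorem to combine them. Where you genuinely diverge is in the treatment of the denominator and in the reduction step. The paper first invokes its Proposition~\ref{THM:pivotmM} to pass to standard normal $Z_i$, and then justifies the denominator by citing Huber and Ronchetti for the Fisher-consistency of the rescaled MAD at the normal model, together with the assertion (via Cox and Hinkley) that Fisher-consistency implies weak consistency; it never exhibits the limit value $\sigma\Phi^{-1}(3/4)$ by a direct computation. You instead work with the $X_i$ themselves, compute the population MAD of $N(\mu,\sigma^2)$ explicitly as the solution of $P(|X-\mu|\le m)=1/2$, i.e.\ $m=\sigma\Phi^{-1}(3/4)$, and argue convergence in probability of the sample MAD by a Glivenko--Cantelli plus continuity argument that explicitly confronts the random centering $|X_i-\hat\mu_m|$. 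Your route is more self-contained and, on this step, arguably more careful: Fisher-consistency is a statement about the functional evaluated at the model distribution and yields weak consistency only when supplemented by a continuity argument for the MAD functional, which is essentially the empirical-process work you sketch. The trade-off is that you leave that uniform-convergence-with-random-centering step as a sketch (it is indeed standard, e.g.\ via monotonicity of the maps $t\mapsto\mathbf{1}\{|x-m|\le t\}$ or a VC-class argument, using that the cdf of $|X-\mu|$ is continuous and strictly increasing at its median), whereas the paper outsources it to references; neither is a gap that invalidates the argument, and your version also quietly dispenses with the pivotality reduction, which is harmless since $\sigma$ cancels explicitly in your decomposition.
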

\begin{proof}
The normal distribution is a part of the location-scale family 
of distributions.
Thus, given Proposition~\ref{THM:pivotmM}, it is sufficient 
to show that 
\[
\sqrt{\frac{2n}{\pi}} {\Phi^{-1}\Big(\frac{3}{4}\Big)} \cdot 
\frac{\displaystyle\mathop{\mathrm{median}}_{1\le i\le n}Z_i}%
{\displaystyle\mathop{\mathrm{median}}_{1\le i\le n} 
  \big| Z_i - \mathop{\mathrm{median}}_{1\le i\le n} Z_i  \big|} 
\stackrel{d}{\longrightarrow} N(0,1),
\]
where $\Phi^{-1}(\cdot)$ is the quantile function of the standard normal distribution.

From Example 2.4.9 of \cite{Lehmann:1999}, we have
\[
\sqrt{n}(\hat{\mu}_m-\mu) \stackrel{d}{\longrightarrow} 
N\bigg(0, \frac{1}{4 f^2(0)}  \bigg), 
\]
where $\hat{\mu}_m = \mathop{\mathrm{median}}_{1\le i\le n}X_i$ again 
and  $f(\cdot)$ is the probability density function of the random variable $X_i$.

Since the distribution of $Z_i$ is the standard normal,  
we have $f(0)=1/\sqrt{2\pi}$.
Using Example 2.4.9 of \cite{Lehmann:1999}, we have 
\[
\sqrt{n} \mathop{\mathrm{median}}_{1 \le i \le n}Z_i 
 \stackrel{d}{\longrightarrow} 
N\bigg(0, \frac{\pi}{2}  \bigg).
\]
Therefore, it clearly follows that 
\[
\sqrt{\frac{2n}{\pi}} \mathop{\mathrm{median}}_{1 \le i \le n}Z_i 
 \stackrel{d}{\longrightarrow} N\big(0,1 \big).
\]

Next, we consider the rescaled MAD estimator 
\[
\frac{\displaystyle\mathop{\mathrm{median}}_{1\le i\le n}
\big| Z_i - \mathop{\mathrm{median}}_{1\le i\le n} Z_i \big|}%
{\Phi^{-1}\big({3}/{4}\big)}.
\]
In Section 5.1 of \cite{Huber/Ronchetti:2009}, it is proven
that the rescaled MAD estimator is Fisher-consistent under the normal distribution.
Clearly, since $Z_i$ are independent and identically distributed
standard normal random variables,  the rescaled MAD estimator is Fisher-consistent 
for $\sigma=1$. 

Now, it is well known that Fisher-consistency implies weak consistency.
For more details, see Section 9.2 (ii) of \cite{Cox/Hinkley:1974}. 
Therefore, combining the weak 
consistency result and Lemma 2.8  (Slutsky) in \cite{Vaart:1998}, we then have: \[
\sqrt{\frac{2n}{\pi}} {\Phi^{-1}\Big(\frac{3}{4}\Big)} \cdot 
\frac{\displaystyle\mathop{\mathrm{median}}_{1\le i\le n}Z_i }%
{ \displaystyle\mathop{\mathrm{median}}_{1\le i\le n}
\big| Z_i - \mathop{\mathrm{median}}_{1\le i\le n} Z_i \big|}
 \stackrel{d}{\longrightarrow} N\big(0, 1 \big),
\]
which completes the proof.
\end{proof}
Thus, although the test statistic $T_m$ in (\ref{EQ:TeststatmM0}) 
does not converge to $N(0,1)$,
the scaled statistic in Proposition~\ref{THM:TeststatmM} does. Specifically, 
we have obtained the following convergence result:
\[
  \sqrt{\frac{2n}{\pi}} \cdot \Phi^{-1}(3/4) \cdot T_m 
 \stackrel{d}{\longrightarrow}  N(0,1). 
\]

\section*{Acknowledgment}
This note was supported  in part 
by the National Research Foundation of Korea (NRF) grant
funded by the Korea government (No. NRF-2017R1A2B4004169).

\bibliographystyle{apalike}
 \bibliography{$HOME/MyFiles/library/TeX/bib/STAT,%
 $HOME/MyFiles/library/TeX/bib/CP,%
 $HOME/MyFiles/library/TeX/bib/ENG}
\end{document}